\newtheorem{theorem}{Theorem}[section]
\newtheorem{lemma}[theorem]{Lemma}
\newcommand{\qed}{\rule{2mm}{2mm}}
\newenvironment{proof}{\par\noindent{\bf Proof.}\quad}{  $\qed$}
\newcommand{\imply}{\Rightarrow}
\title{Firefighter Problem with Minimum Budget: Hardness and Approximation Algorithm for Unit Disk Graphs}	
\author{  
	Diptendu Chatterjee \thanks{
		Indian Statistical Institute,Kolkata, India.
		{\tt diptendu\_r@isical.ac.in}} 
	
\and  
	Rishiraj Bhattacharyya \thanks{
		NISER, Bhubaneswar, India.
		{\tt rishi@niser.ac.in}} 
}
\begin{document}
	
	\maketitle

	\begin{abstract}
		Unit disk graphs are the set of graphs which represent the intersection of disk graphs and interval graphs. These graphs are of great importance due to their structural similarity with wireless communication networks. Firefighter problem on unit disk graph is interesting as it models the virus spreading in an wireless network and asks for a solution to stop it. 
		In this paper, we consider the MIN-BUDGET firefighter problem where the goal is to determine the minimum number of firefighters required and the nodes to place them at each time instant to save a given set of vertices of a given graph and a fire breakout node. We show that, the MIN-BUDGET firefighter problem in a unit disk graph is NP-Hard. We also present a constant factor approximation algorithm.
		
	\end{abstract}

	\section{Introduction} $\ \ \ $
	\textsc{Firefighter} Problem on a graph is broadly described by saving vertices of a graph from fire by placing fire fighters on them, given a source vertex of fire and fire spreading to every neighbour of a burning vertex unless fire fighters are placed on them.  Firefighter Problem can be classified in two main categories depending on the set of vertices to be saved by it, i.e. \textsc{Max-Save} and \textsc{Min-Budget}. In the \textsc{ Max-Save} version, the goal is to save maximum number of vertices possible. In contrast, in the \textsc{ Min-Budget} version, the goal is to save a given set of vertices using minimum number of fire fighters at each time step.
	
	Firefighter Problem has very important relevance to many real life problems. Suppose a virus is spreading through a network and our goal is to prevent some nodes in the network from this virus by using some preventive measure at some nodes, then this can be seen as a firefighter problem on this network where the virus is like the spreading fire and the preventive measure can be thought as firefighters. In another case, suppose a sensitive data or information has been leaked in a communication network and we need to restrict the flow of the information to some vulnerable or less secured nodes. This scenario also can be modeled as a firefighter problem, where the information can be seen as the spreading fire and the vulnerable nodes as the vertices to be saved using firefighters. 
	
	\textsc{Unit Disk} Graph represents the intersection of disk graph and interval graph. For each vertex $u$ in the graph, there is a disk of radius $1$ in the Euclidean plane. There is an edge between vertex $u$ and $v$, if $v$ is inside the unit disk centered at $u$. 
	
	Unit Disk Graphs are relevant to communication network as in almost all practical communication networks the towers or routers have same communicating range which can be modeled as disks of coverage area similar to unit disk graph and to prevent any virus or sensitive information from reaching a node we have to take preventive measures which is same as using some firefighters at some nodes in the firefighter problem case.  In most of the practical scenario in a communication network the goal is to keep the sensitive data away from some specific nodes in the network given the network topology. This case resembles Firefighter \textsc{Min-Budget Problem}, where we are given the source of fire and a set of vertices to be saved from fire by putting firefighters in some node at different time instances and minimize our cost while doing so. The cost here is the number of firefighters required at each time instant to achieve our goal.
	
	\subsection{Our Results}
	In this paper, we prove the following two results.
	
	\begin{theorem}
		The Firefighter MIN-BUDGET problem on unit disk graph is NP hard.
	\end{theorem}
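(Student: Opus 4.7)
My plan is to establish NP-hardness by a polynomial-time reduction from a known NP-hard problem to the MIN-BUDGET firefighter problem on unit disk graphs. The natural candidates are (i) 3-SAT / Planar 3-SAT, exploiting the planar embedding that unit disk graphs admit, or (ii) Set Cover, whose structure mirrors the combinatorics of deciding at each time step which vertex to defend. Given the geometric nature of the target class, I would favor starting from Planar 3-SAT (or Planar Monotone 3-SAT), because its rectilinear layout can be translated into coordinates for disk centers while respecting intersection constraints.

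The overall structure of the reduction would proceed in three stages. First, design a fire source vertex $s$ that burns at time $0$ and branches, through short paths, into variable gadgets, one per variable. Each variable gadget would be a small unit-disk configuration in which, within the first few burning steps, exactly one of two ``literal paths'' (corresponding to $x_i$ and $\neg x_i$) can be blocked with a single firefighter, encoding a truth assignment. Second, extend each literal path through disks placed along the planar embedding of the formula until it reaches the gadget of every clause containing that literal. Third, place the target set $T$ (the vertices to be saved) behind each clause gadget so that a clause vertex remains saved if and only if at least one of its incident literal paths has already been cut. The claim would then be that the formula is satisfiable if and only if a budget of $b$ firefighters (where $b$ equals the number of variables, or some similarly calibrated parameter) suffices to save every vertex of $T$.

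The primary technical obstacle, and the one I would spend the most care on, is realizability: every adjacency I rely on must come from actual intersection of unit disks in the plane, and every non-adjacency must be enforced by geometric separation. This is nontrivial for three reasons. First, long literal paths must be routed through the planar embedding without unintended crossings creating spurious edges; I would use the orthogonal layout of Planar 3-SAT and thicken paths into columns of slightly offset disks. Second, the temporal aspect of the firefighter process means that equalizing the arrival time of the fire at each clause gadget is essential, otherwise a cheap defense early on might trivially cut off a whole branch; I would pad shorter paths with subdivisions (additional unit disks along the route) so that all relevant times are synchronized. Third, the variable gadget must genuinely force a binary choice with a single firefighter and not allow a clever ``partial'' defense; verifying this requires a careful case analysis on where a firefighter can be placed at time $1$.

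Once the gadgets are in place, correctness follows a standard two-direction argument: from a satisfying assignment, produce a strategy that at each time step defends the appropriate literal vertex for each variable and then defends the forced successor vertices along each saved path; conversely, from a successful defense with budget $b$, extract an assignment by reading off which literal vertex of each variable gadget was defended first, and use the timing synchronization to show that every clause must contain a defended (hence ``true'') literal. The size of the construction is polynomial in the formula size, completing the reduction. If Planar 3-SAT proves awkward to synchronize cleanly, I would fall back to reducing from Set Cover, using a three-layer unit disk construction where the middle layer encodes sets and the firefighter budget equals the cover size; this trades geometric elegance for simpler timing, but the realizability obstacle remains the dominant difficulty in either route.
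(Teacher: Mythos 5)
Your proposed route is genuinely different from the paper's: the paper does not build SAT gadgets at all, but instead reduces from the firefighter problem on trees of maximum degree three (already known to be NP-hard), and the entire technical content is a rectilinear embedding of such a tree into a unit disk graph — each tree edge becomes a subdivided path of length $2N$, and each vertex is replicated ($2N$ or $4N$ copies) so that saving a vertex of the tree is equivalent to saving all images of a designated path vertex in the embedding. That choice matters: by starting from a problem whose game-theoretic hardness is already established, the paper only has to verify that the geometric embedding preserves the save/burn semantics, whereas your plan must re-derive the combinatorial hardness from scratch through variable and clause gadgets.

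As written, your proposal has a genuine gap: it is a plan in which the hardest steps are named but not carried out. The variable gadget, the clause gadget, and the routing of literal paths are never constructed, and you yourself identify realizability and timing synchronization as the dominant difficulties without resolving them. More concretely, there is a step that is likely to fail as described: in the MIN-BUDGET problem the budget $B$ is available at \emph{every} time step, not once in total. If you set $B$ equal to the number of variables, then after the time-$1$ ``assignment'' phase the defender continues to receive $B$ fresh firefighters at every subsequent step for the entire (polynomially long) spread of the fire along your padded, synchronized literal paths. With that surplus the defender can typically sever every literal path one by one regardless of whether the underlying assignment satisfies the formula, which collapses the backward direction of your equivalence. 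Any SAT-based construction must account for where all of these extra firefighters can legally go — e.g.\ by forcing them to be consumed by auxiliary ``guard'' branches at every step — and this accounting is precisely the part of the argument that is missing. Until the gadgets are exhibited, shown to be realizable as unit disks, and shown to survive this budget-accumulation issue, the reduction is not established.
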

	
	\begin{theorem}
		There exists an approximate solution to Fighter MIN-BUDGET problem on unit disk graph with a constant approximation factor.
	\end{theorem}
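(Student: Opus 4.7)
The plan is to leverage the geometric structure of unit disk graphs to produce a simple BFS-based firefighter schedule and then bound its budget against a natural cut-based lower bound.

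First, I would establish a lower bound. For any feasible schedule with per-round budget $B$ that saves $T$, there is a time $t$ at which the at most $tB$ already-placed firefighters, together with the burnt vertices, separate $s$ from $T$ in the residual graph. Restricting attention to the firefighters alone yields $B^\star \geq |C|/t$ for a suitable $s$-$T$ vertex separator $C$ lying in the BFS ball of radius $t$ around $s$, where $B^\star$ is the optimum min-budget. In particular, $B^\star = \Omega\bigl(\min_i |L_i|/i\bigr)$, where the minimum is over BFS layers $L_i$ whose removal disconnects $s$ from $T$.

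Next, I would describe the algorithm. Run BFS from $s$ to compute the layers $L_0, L_1, \ldots$. Among those $L_i$ that separate $s$ from $T$, choose the one minimizing $\lceil |L_i|/i \rceil$, call it $L_{i^\star}$, and place firefighters on the vertices of $L_{i^\star}$ uniformly across time steps $1, 2, \ldots, i^\star$ using budget $B := \lceil |L_{i^\star}|/i^\star \rceil$ per round. Since the fire has not yet reached $L_{i^\star}$ at time $i^\star-1$, every vertex of $L_{i^\star}$ is protected before it could burn, so the schedule is feasible and saves all of $T$.

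To finish, I would argue $B = O(B^\star)$ via a geometric ``inflation'' lemma: in a unit disk graph, any $s$-$T$ separator $C$ intersects only a constant number of consecutive BFS layers around $s$ (since a unit disk can overlap only $O(1)$ shells of unit Euclidean width, and graph distance dominates scaled Euclidean distance). Hence some layer $L_i$ near $C$ has $|L_i| = O(|C|)$ while $i = \Omega(\mathrm{dist}(s,C))$, matching the cut-based lower bound up to a constant. The main obstacle is precisely this inflation step: one must control how a worst-case combinatorial separator maps to a single BFS layer, and deal with pathological placements where $T$ has vertices close to $s$ (where a single thin ring may be too expensive relative to $i^\star$), possibly by allowing several concentric sub-separators or by sharpening the lower bound with a volume argument inside the relevant annulus.
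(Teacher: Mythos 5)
Your route is genuinely different from the paper's (which circumscribes each disk by an axis-parallel square and runs an interval-graph min-budget algorithm in four directions around $s$, using per-level \emph{minimum cut-sets} as the cheap alternative to protecting $T$ directly), but the step you yourself flag as the "main obstacle" --- the inflation lemma --- is in fact false, and the argument collapses there. A full BFS layer $L_i$ in a unit disk graph can be arbitrarily larger than the minimum $s$-$T$ separator, for two reasons: first, there is no packing bound on a layer (arbitrarily many unit disks may overlap in a tiny region, so $|L_i|$ is not controlled by any Euclidean annulus argument, and graph distance only \emph{lower}-bounds scaled Euclidean distance, so a BFS shell need not resemble a geometric ring at all); second, and more fatally, $L_i$ must contain \emph{every} vertex at distance $i$, including vertices on branches that are irrelevant to reaching $T$. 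Concretely, take $s=(0,0)$, a path $p_j=(2j,0)$ for $j=1,\dots,D$ with $T=\{p_D\}$, and at each $j\ge 0$ a clique of $k=D^2$ vertices packed in a tiny ball at $(2j,(-1)^j\cdot 2)$, adjacent only to $p_j$ (respectively to $s$ for $j=0$) and to each other. This is a valid unit disk graph, the optimum budget is $1$ (protect $p_1$ at time $1$), yet every separating layer has $|L_i|=k+1$, so your budget $\min_i\lceil |L_i|/i\rceil \approx (D^2+1)/(D-1)=\Theta(D)$ --- an unbounded approximation ratio. The primitive you need is a minimum vertex cut between the burnt region and $T$ restricted to each level (as in the paper's $K_i$), not the level itself; once you switch to cuts, the "inflation" question disappears but you inherit the harder task of showing that some level admits a cut that is simultaneously cheap and reachable in time, which is where the real work lies.

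A secondary, also unresolved, gap is the one you mention at the end: if $T$ contains vertices at small distance $j$ from $s$, they must be protected within the first $j$ rounds, and this interacts with the budget spent on the eventual separator; your schedule as written does not account for this at all (it spends all of rounds $1,\dots,i^\star$ on $L_{i^\star}$ and lets every $T$-vertex inside radius $i^\star$ burn). Your lower bound also needs tightening: the total number of firefighters an optimal schedule places is $B^\star t$ where $t$ is the number of rounds until the fire stops, and relating that $t$ to the index $i$ of the layer/cut you compare against requires the timing constraint that a vertex at distance $i$ cannot be protected after round $i$; as stated, "$B^\star\ge |C|/t$ for a suitable separator $C$" does not yet yield $B^\star=\Omega(\min_i|L_i|/i)$ even if the inflation step were repaired.
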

	
	\subsection{Previous Works}
	The Freighter problem has a rich history of work. We mention some results which are most relevant to our work. For a detailed account, we refer the reader to the survey on results, directions and open questions related to firefighter problem by \cite{finbow2009firefighter}. The firefighter problem was introduced by \cite{hartnell1995firefighter} who studied the way to protect an infinite graph from a spreading virus.  \cite{hartnell2000firefighting} showed that a simple 2-approximation algorithm exists for Firefighter Max-Save Problem. \cite{wang2002fire} showed 2 firefighters are required to control the fire in an finite 2 dimensional grid with minimum 8 steps and for (r-1) firefighters are required to save a r-regular graph . \cite{develin2007fire} further added that at least 18 vertices will burn to do so and (2d-1) firefighters can hold a fire in an n-dimensional grid if fire breaks out in a single vertex .\cite{macgillivray2003firefighter} proved NP-hardness of firefighter problem on Bipartite graphs and gave a polynomial time algorithm for the same problem on P-trees and also gave an integer programming to determine an optimal sequence of vaccinating the vertices . This work was followed by the work by \cite{hartke2004attempting} to narrow the integrality gap between the integer programming optimal and the optimal of the linear programming relaxation. \cite{finbow2007firefighter} showed that the firefighter problem is NP-hard on trees with maximum degree 3. However if fire breaks out at a vertex with degree 2, the problem is in P. \cite{king2010firefighter} showed that Resource Minimization for Fire Containment(RMFC) is NP-hard on full trees of degree 3. They also showed that the 2-approximation is  hard as well, but  on graphs of maximum degree 3, if fire breaks out at a vertex of degree 2, the problem is in P.  In a follow-up work of \cite{chalermsook2010resource}, the authors showed an $O(\log n)$-approximation LP-rounding algorithm for RMFC on trees. \cite{chlebikova2014firefighter} showed that, the complexity of firefighter problem is governed by path width and maximum degree of a graph. Work on parameterized complexity of firefighter problem can be found in the works by \cite{bazgan2011parameterized,cygan2011parameterized}. \cite{anshelevich2009approximation}  considered both spreading and non-spreading vaccination models of firefighter problem to show the NP-hardness of approximation of the problem within $n^{1-\epsilon}$ for any $\epsilon$.  \cite{fomin2012making,fomin2016firefighter} considered the  Max-Save Firefighter problem and gave polynomial time algorithm for interval graphs, split graphs, permutation graphs, and proved NP-hardness of the problem on Unit Disk Graphs. Online firefighter problem and fractional firefighter problem were introduced by \cite{coupechoux2019firefighting} and they also gave the optimal competitive ratio and asymptotic comparison of number of firefighters and the size of the tree levels. NP-completeness of different graph problems on unit disk graphs have been discussed by \cite{clark1990unit}.  
	
	\section{Notation and Preliminaries}
	We use the following notations. If $S$ is a set, $|S|$ denotes size of the set. $d_{\ell_2}(u,v)$ denotes the $\ell_2$ distance of two points $u$ and $v$. An instance of the firefighter problem is denoted by $(G,s)$ where $G$ is an undirected, connected, simple graph, and the fire starts from the vertex $s$. $V_G$ and $E_G$ denote the vertex set and the edge set of $G$ respectively.
	
	\subsubsection*{{\sc Min-Budget} Firefighter Problem }
	\label{sec:min-budg-firef}
	The {\sc Min-Budget} Firefighter Problem is defined in the following way.
	{\em Given a undirected, connected, simple graph $G$, a vertex $s\in V_G$, $T\subseteq V_G$, find the minimum number of firefighters needs to be placed at every step to save $T$.} 
	
	\subsubsection*{Decision Version of Min-Budget Firefighter Problem in Unit Disk Graph.}
	\begin{itemize}
		\item \textbf{Input}: A unit disk graph $G(V,E)$, a vertex $s\in V$, a subset $T\subseteq V$
		\item \textbf{Question}: If the fire starts from $s$, can all vertices of $T$ be saved if at most $B$ number of firefighter can be placed at each step?  
	\end{itemize}
	
	\subsubsection*{Interval Graph}
	\label{sec:interval-graph}
	Our approximation algorithm extends the exact algorithm for interval graphs. 
	An interval Graph is an undirected graph formed by correspondence with a set of intervals on real line, where each interval corresponds to a vertex in the graph and two vertices have an edge in between them if their corresponding intervals intersect. 
	
	\subsubsection*{Unit Disk Graph}
	\label{sec:unit-disk-graph}
	Unit Disk Graph is an undirected graph formed by correspondence with a set of disks of unit radius on euclidean plane, where each disk correspond to a vertex in the graph and two vertices have edge in between them if their corresponding disks intersect.
	
	\section{Proof of theorem 1.1:  the Reduction}
	\label{sec:proof-hard}
	We show a reduction from the Firefighter Problem for Trees with maximum degree three which has been shown to be NP hard. The proof idea is similar to the one of the MAX-SAVE version due to \cite{fomin2016firefighter}.  The first step is to transform an instance of the MIN-BUDGET Firefighter problem on Trees with maximum degree Three to an instance of the MIN-BUDGET firefighter problem on an Unit disk graph. In order to do that we show an embedding of any full rooted degree three tree into an unit disk graph.
	
		
	
	\subsection{The Construction}
	\label{sec:construction}
	Let $(T,\Gamma,B)$ be the input  instance of the firefighter problem on Trees with maximum degree three. $\Gamma$ denotes the set of vertices to be saved and $B$ denotes the budget. The first step is embed $T$ into a unit disk graph $T_G$.
	\textsc{Rectilinear Embedding.}   Let $T$ be a full rooted tree of maximum degree three and $r$ be the root of $T$. Let $m$ be the number of vertices in $T$. We add a new vertex $s$ in the tree and connect it to $r$. Let us call the new graph $T'$ rooted at $s$ with $N=m+1$ vertices. 
	
	We number the leaves of the new graph $T'$ according to their appearance in the pre-order traversal starting from $s$. Then, for each non-leaf vertex, number it with the median of the numbers corresponding to its children. If any of them is having two children, then number it with the maximum number belonging to its children. We denote the number corresponding to the vertex $v$ after numbering in this fashion by $n(v)$.
	
	The embedding works in the following way. Put the vertex $s$ at the coordinate $(2n(s), 0)$. Let $v$ be any other vertex. Suppose the parent of $v$, say $u$, has been placed at $(X_u,Y_u)$. We place $v$ at $(X_v,Y_v)$, where $X_v=X_u+ 2\left(n(u)-n(v)\right)$, and $Y_v= Y_u+ 2 \left(N- \left|n(u)-n(v)\right|\right)$.
	
	The edges are mapped to paths parallel to axis. We draw the edge between $u$ and $v$ as follows. Draw a line of length $\left(N- \left|n(u)-n(v)\right|\right)$ in the positive $Y$ direction from $u$ and then a line of length $2|n(u) - n(v)|$ in the positive/negative direction of $X$ axis, according to the positive/negative values of $n(u) - n(v)$. By construction this embedding is indeed a rectilinear embedding. Moreover, in this embedding each edge is of length exactly $2N$ and has at most one bend.
	
	The final step of our transformation is the following. On each edge starting from a parent $u$ to a child $v$ in $T'$, we put $2N - 1$ new vertices, one at every point in the path joining $u$ to $v$.  $W_v^1$ to $W_v^{2N-1}$ denotes these new vertices, where $W_v^1$ is adjacent to $u$ and $W_v^{2N-1}$ is adjacent to $v$. Finally, for each vertex $v$ in $T'$, we split $W_v^1$, $2N - 1$ times and split $W_v^2,W_v^3,....,W_v^{2N-1},v$, $4N - 1$ times. We call this graph $T_G$.
	
	\begin{lemma}
		\label{lemma:const}
		$T_G$ is a unit disk graph.
	\end{lemma}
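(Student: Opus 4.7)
The plan is to exhibit an explicit placement of the vertices of $T_G$ as points in the Euclidean plane so that two vertices are adjacent in $T_G$ exactly when their points lie at distance at most $1$. I proceed in three stages mirroring the construction.

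\textbf{Stage 1: the backbone $T'$.} I would reuse the rectilinear placement of $T'$ and establish the key geometric claim that (i) distinct vertices of $T'$ lie at Euclidean distance at least $2$, and (ii) the grid-paths representing two edges of $T'$ that do not share an endpoint remain at Euclidean distance strictly greater than $1$ at every pair of points. These properties rest on the preorder/median numbering: the leaves descending from a vertex $v$ occupy a contiguous block of preorder numbers, so by the placement rule $X_v = X_u + 2(n(u)-n(v))$ the $X$-coordinates of the descendants of $v$ all lie in a common horizontal slab, and sibling subtrees occupy disjoint such slabs. The vertical offsets $Y_v = Y_u + 2(N-|n(u)-n(v)|) \ge Y_u + 2$ place each child strictly below its parent and route the two descending paths into disjoint $y$-bands once they separate. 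I would verify (i) and (ii) by induction on $T'$.

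\textbf{Stage 2: the subdivisions.} Each edge of $T'$ has length exactly $2N$ and at most one bend, so its rectilinear path passes through exactly $2N{+}1$ lattice points including its endpoints. I place $W_v^1,\ldots,W_v^{2N-1}$ at the interior lattice points in order from $u$ to $v$. Consecutive points on the path are at Euclidean distance exactly $1$, hence adjacent; points at chain-distance two are at distance $2$ (straight) or $\sqrt{2}$ (flanking a bend), hence non-adjacent; longer chain distances are even larger. For subdivision vertices on distinct edges, claim (ii) forces distance $>1$ away from the shared endpoint of incident edges, and at the shared endpoint I separately check that the two first-subdivision vertices $W_v^1, W_w^1$ of two siblings lie in distinct grid directions from the parent, so their mutual distance is $\sqrt{2}$ or $2$.

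\textbf{Stage 3: the splittings.} I realize the $2N-1$ copies of $W_v^1$ and the $4N-1$ copies of each of $W_v^2,\ldots,W_v^{2N-1},v$ by collocation, placing every copy at exactly the geometric point assigned to the original in Stages 1 and 2. Two coincident copies lie at distance $0 \le 1$, so the copies of any one original automatically form a clique, and since every copy sits at its original's position, the adjacency between any two bundles is complete bipartite when the originals were adjacent (distance $\le 1$) and empty otherwise. This reproduces $T_G$ exactly.

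The main obstacle is claim (ii) in Stage 1. The delicate subcase is a vertex with two children whose $n$-values fall on the same side of $n(u)$, which could threaten to route both outgoing paths in the same initial grid direction; the numbering rule (median for degree three, maximum for degree two) is what keeps the sibling paths separated, and this is the part that must be checked with care, much as in the analogous Max-Save reduction of \cite{fomin2016firefighter}. Once that geometric invariant is in hand, the rest of the argument reduces to routine distance computations on a unit grid.
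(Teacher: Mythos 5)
Your proposal follows essentially the same route as the paper's proof: exhibit the rectilinear placement explicitly and verify that adjacency in $T_G$ coincides with Euclidean distance at most $1$ by a case analysis on pairs (consecutive on a path, non-consecutive on the same path, on different paths). Your Stage 2 computations are exactly the paper's argument, and your Stage 3 is actually \emph{more} complete than the paper: the published proof never mentions the $2N$ and $4N$ split copies at all, whereas you realize them by collocation and observe that coincident points form cliques and inherit their original's neighborhood. That addition is correct and welcome.

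The genuine gap is the one you yourself flag and then defer: Stage 1, claim (ii). This is not a routine induction in the embedding as the paper defines it. Every edge from a parent $u$ to a child $v$ is drawn by first going a distance $2\left(N-\left|n(u)-n(v)\right|\right)\geq 2$ in the \emph{positive $Y$ direction from $u$}, and only then turning in $X$. Consequently, if $u$ has two children $v_1,v_2$, both edge-paths leave $u$ along the same vertical ray, so $W_{v_1}^1$ and $W_{v_2}^1$ would both sit at $(X_u,Y_u+1)$ --- distance $0$, yet non-adjacent in $T_G$. Your Stage 2 assertion that sibling first-subdivision vertices ``lie in distinct grid directions from the parent'' is therefore false for the construction as written, and the numbering rule (median/maximum) does nothing to prevent this, since it only controls the $X$-offset of the later horizontal segment, not the initial direction of departure. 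The paper's own proof papers over exactly this point with the sentence ``if some part of the paths are same, then $d_{\ell_2}(w_1,w_2)\geq\sqrt{2}$,'' which is not justified. To close the gap you would have to repair the embedding itself (e.g., route sibling edges out of $u$ in distinct axis directions, as in the Max-Save reduction of Fomin et al.) and then prove claim (ii) for the repaired embedding; as it stands, your plan defers precisely the step on which the lemma turns.
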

	
	\begin{proof}
		Consider any edge $(w_1,w_2)$ in $T_G$. By construction, $w_1,w_2$ are two adjacent vertices in the path joining vertex $u$ and vertex $v$ of $T$. As all the paths are drawn parallel to one of the axis, $d_{\ell_2}(w_1,w_2)=1$.\\
		For the other direction, consider any non-adjacent pair of vertices $w_1,w_2$ of $T_G$. If the vertices are in the same path (joining two vertices of $T$ in the embedding), then there is at least one vertex between $w_1$ and $w_2$ in the path. As the vertices are placed at distance $1$, and each path has only one bend, $d_{\ell_2}(w_1,w_2)\geq \sqrt{2}>1$.
		
		Suppose, $w_1$ and $w_2$ are in different paths. By construction, distance between two completely different paths in $T_G$ is at least $2$. If some part of the paths are same, then  $d_{\ell_2}(w_1,w_2)\geq \sqrt{2}>1$. Hence, all the non adjacent vertices in $T_G$ are more than unit distance away from each other.
	\end{proof}
	
		
	
	The last step of the above construction produces $4N$ images of every vertex $v\in T$.  Let $\Gamma_G$ be the set of all $4n$ images for each of the vertices $\Gamma$.
	The constructed instance of the firefighter problem is $(T_G,\Gamma_G,B)$. The following lemma proves the correctness of the reduction.
	
	\begin{lemma}
		\label{lemma:red}
		$(T_G ,\Gamma_G)$ is a YES instance for the Decision version of the Minimum Budget Firefighter Problem on unit disk graph, if and only if $(T,\Gamma)$ is a yes instance of the Minimum Budget Firefighter problem on tree.
	\end{lemma}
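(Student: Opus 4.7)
The proof is a two-direction equivalence, and both directions will be reasoned about through the ``time-dilation'' the construction induces: every edge of $T'$ becomes a path of length $2N$ in $T_G$, so fire propagates $2N$ times more slowly in $T_G$ than in $T$. The split multiplicities of $2N-1$ at $W_v^1$ and $4N-1$ at the later path vertices $W_v^2,\ldots,W_v^{2N-1},v$ are chosen so that, within a window of $2N$ consecutive steps of $T_G$, the total number of firefighter placements available with budget $B$ exactly matches what is needed to mirror $B$ vertex protections of a single step of $T$.

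For the forward direction, given a budget-$B$ strategy $\sigma$ on $T$ that saves $\Gamma$, I would define $\sigma_G$ by mapping step $t$ of $T$ to the block of steps $\{2N(t-1)+1,\ldots,2Nt\}$ in $T_G$. Whenever $\sigma$ protects a tree vertex $v$ at step $t$, $\sigma_G$ devotes the $t$-th block to placing firefighters on all copies of $W_v^1$, the first subdivision vertex on the path leading to $v$'s image. A direct timing check shows fire reaches these copies precisely at the start of block $t$, so the protections arrive in time; the $2N$ copies of $W_v^1$ are exactly absorbed by $2N$ steps at budget $B$, leaving room for the other protections $\sigma$ schedules in step $t$. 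By induction on $t$, no image in $\Gamma_G$ ever burns.

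For the backward direction, given a budget-$B$ strategy $\sigma_G$ saving $\Gamma_G$, I would extract $\sigma$ on $T$. The heavy $4N-1$ split of the later path vertices forces a normalization: in any non-trivial budget regime, saving the $4N$ copies of an image of $v \in \Gamma$ is cheaper only by fully blocking some earlier $W_\cdot^1$-group on the path into $v$ rather than fighting fire at the deeper splits, so $\sigma_G$ may be assumed to act block-wise, fully blocking at most $B$ such groups per $2N$-window. This assignment of blocked groups to blocks projects cleanly to $\sigma$ protecting the corresponding at most $B$ tree vertices per step.

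The main obstacle is making this backward normalization rigorous: showing that fragmented placements, partial coverage of split groups, or protections that straddle block boundaries can always be rescheduled into block-aligned, full-group form without losing success or increasing the budget. This is a combinatorial exchange argument based purely on the path lengths $2N$ and the split multiplicities $2N-1$ and $4N-1$, independent of the unit disk realization furnished by Lemma~\ref{lemma:const}.
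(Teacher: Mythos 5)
Your plan follows essentially the same route as the paper: the forward direction is the paper's block-wise simulation (one tree step $\mapsto$ a window of $2N$ steps spent vaccinating the $2N$ copies of $W_v^1$), and the backward ``normalization'' you flag as the main obstacle is precisely what the paper's Lemma~3 supplies, via the exchange argument that fully saving the $4N$ copies of a deeper split vertex $W_v^k$ requires at least $2N$ placements already made by the time fire reaches $W_v^1$, which could instead have covered the $2N$ copies of $W_v^1$. So the approach is correct and matches the paper; no further comparison is needed.
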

	
	\subsubsection{Proof of Lemma \ref{lemma:red}}
	\label{sec:proof-lemma-refl}
	
	The proof of Lemma \ref{lemma:red} follows from following two lemmas.
	
	\begin{lemma}
		There is a strategy to save all $4N$ images of some vertex $v$ of $T_G$ in firefighter game if and only if there is a strategy to save all $2N$ images of $W_v^1$ of $T_G$.
	\end{lemma}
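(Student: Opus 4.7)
Two implications. For the backward direction, assume a strategy $\sigma$ saves all $2N$ copies of $W_v^1$. In $T'$ the unique path from $s$ to $v$ enters the subpath $u, W_v^1, W_v^2, \ldots, W_v^{2N-1}, v$ at the parent $u$ of $v$, and the twin-splitting construction makes every copy of $W_v^{j+1}$ adjacent to every copy of $W_v^j$. Thus protecting all $2N$ copies of $W_v^1$ cuts every edge along which the fire could reach $W_v^2$, hence cuts every route to the $4N$ copies of $v$, so $\sigma$ automatically saves $v$.

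For the forward direction, take a strategy $\sigma$ saving all $4N$ copies of $v$. If no copy of $u$ ever burns under $\sigma$, no copy of $W_v^1$ sees a burning neighbor, and $\sigma$ saves $W_v^1$. Otherwise let $t_u$ be the first round at which a copy of $u$ burns; then the fire must be stopped at some smallest index $i^* \in \{1,\ldots,2N-1\}$ with all $4N$ copies of $W_v^{i^*}$ protected (or, exceptionally, all $4N$ copies of $v$ itself). The case $i^* = 1$ is immediate, so assume $i^* \geq 2$, and I will modify $\sigma$ into $\sigma'$ by keeping every placement outside the subpath unchanged, discarding the $4N$ firefighters $\sigma$ used on the blocking layer, and inserting $2N$ firefighters on the copies of $W_v^1$ within rounds $1,\ldots,t_u$. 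Two quantitative facts make this substitution work: $W_v^1$ has only $2N$ copies, exactly half of any deeper blocking layer, and because $v$ corresponds to a non-source vertex of the original tree, its parent $u$ lies at depth at least one in $T'$, so the fire must traverse a subdivided edge of length $2N$ before reaching any copy of $u$, giving $t_u \geq 2N$. Combining the necessary condition $B'(t_u + i^* - 1) \geq 4N$ forced on $\sigma$ with $i^* - 1 \leq 2N - 1 \leq t_u$ yields $B' t_u \geq 2N$, exactly the room $\sigma'$ needs.

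The main obstacle is the per-round scheduling of the substitution: $\sigma$ may have placed many of its blocking-layer firefighters in the late rounds $(t_u, t_u + i^* - 1]$, and those placements cannot be repurposed directly for $W_v^1$, whose copies must be defended no later than round $t_u$. The careful accounting combines the early blocking-layer slots of $\sigma$ already lying in $[1,t_u]$ with the per-round slack freed by discarding the rest of $\sigma$'s subpath placements, and uses the inequality $B' t_u \geq 2N$ to confirm that the $2N$ substituted firefighters never exceed $B'$ in any single round. Once this is done, $\sigma'$ saves all $2N$ copies of $W_v^1$ within the same per-round budget as $\sigma$, closing the forward direction.
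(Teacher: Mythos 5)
Your backward direction is fine and agrees with the paper's: the $2N$ co-located copies of $W_v^1$ form a cut separating the fire source from every copy of $W_v^2,\dots,W_v^{2N-1},v$, so protecting all of them saves $v$. Your forward direction also follows the paper's route --- an exchange argument trading the $4N$ firefighters spent on the blocking layer $W_v^{i^*}$ for $2N$ firefighters on $W_v^1$ --- but the step you yourself flag as ``the main obstacle'' is exactly where the argument is not closed, and the inequality you offer does not close it. The bound $B' t_u \geq 2N$ controls the \emph{total} per-round capacity over rounds $1,\dots,t_u$, not the \emph{free} capacity. Since you insist on keeping every placement outside the subpath unchanged, $\sigma$ may saturate all $B'$ slots of every round in $[1,t_u]$ elsewhere in the graph and place all $4N$ blocking-layer firefighters in the late rounds $(t_u,\,t_u+i^*-1]$; this is possible whenever $B'(i^*-1)\ge 4N$. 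In that case discarding the blocking layer frees no slot at all in $[1,t_u]$, and inserting $2N$ placements on $W_v^1$ there necessarily exceeds the budget in some round. So the substitution as described is not budget-feasible in general. To repair it you need either an argument that at least $2N$ of the blocking-layer placements must already have occurred by the time the fire reaches the level of $W_v^1$ (this is precisely what the paper asserts, and it holds only under a restriction on the budget such as $B'(i^*-1)\le 2N$, e.g.\ the case $B'=1$), or a surgery that is allowed to reschedule placements outside the subpath as well.

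A second, smaller issue: your bound $t_u\ge 2N$ relies on $u$ lying at depth at least one in $T'$, but when $v=r$ the parent $u$ is the source $s$ itself, so $t_u=0$ and the fire reaches the copies of $W_r^1$ after a single round; saving all $2N$ of them then forces $B'\ge 2N$, while the copies of $r$ can still be saved with a much smaller budget by blocking at a deeper layer. The added vertex $s$ only pushes this difficulty from the children of $r$ onto $r$ itself, so this case must be excluded or handled separately rather than absorbed into the general inequality.
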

	
	\begin{proof}
		If in the graph $T_G$ all the $4N$ images of some vertex $v$ is saved then it means at least one of the vertices $W_v^2,W_v^3,....,W_v^{2N-1},v$  had all its $4N$ images vaccinated till fire came to that level or all the $2N$ images of $W_v^1$ were vaccinated before fire came to it. Now, if the first case is true for $v$ or some $W_v^k$ where $k \in \{2, 3, . . . . . . , 2N - 1\}$ then at least $2N$ images of $W_v^k$ were vaccinated when fire reached to $W_v^1$. Then, in spite of vaccinating $2N$ images of $W_v^k$, we could have vaccinated all the images of $W_v^1$. Also in a contra-positive sense if by any optimal strategy we can't save all the $4N$ images of $v$ before it reaches that level, then at least one vertex of $W_v^1$ must be infected when fire reached its level. So in $T_G$, saving $v$ means saving $W_v^1$ and vice-versa. To make this argument valid for $W_t^1$, where $t$ is the child of $r$ in $T$, we added the extra vertex $s$ to $T$ in the construction.
	\end{proof}

	\begin{lemma}
		There is a strategy to save $W_v^1$ where $W_u^1$ can't be saved in $T_G$ in firefighter game, if and only if there is a strategy to save $v$ where $u$ can't be saved in $T$ given $v$ is the child of $u$ in $T$.
	\end{lemma}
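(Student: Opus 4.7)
The plan is to exploit the time-dilation built into the construction: every tree-edge of $T$ is stretched into a path of length $2N$ in $T_G$, so fire traverses one tree-edge in exactly $2N$ time steps of the $T_G$-game. I therefore bundle the steps of $T_G$ into ``rounds'' of length $2N$, with round $t$ consisting of steps $2N(t-1)+1,\ldots,2Nt$, and propose the correspondence ``vaccinating $w$ at time $t$ in $T$'' $\leftrightarrow$ ``vaccinating the $2N$ copies of $W_w^1$ during round $t$ in $T_G$.'' Writing $d(w)$ for the depth of $w$ in $T$ with the fire source at depth $0$, a direct path-length computation in the rectilinear embedding gives that fire reaches $W_w^1$ in $T_G$ at time exactly $2N\,d(w)+1$, so any round $t\le d(w)$ finishes at $2Nt\le 2N\,d(w)$, strictly before the fire-arrival deadline for $W_w^1$.

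For the $T\Rightarrow T_G$ direction, I first observe that a strategy $\sigma$ in $T$ which saves $v$ but not $u$ must actually vaccinate $v$ itself, since every proper ancestor of $v$ is also an ancestor of $u$ and vaccinating it would save $u$ too. I then construct $\sigma'$ in $T_G$ by scheduling, for every vertex $w$ that $\sigma$ vaccinates at time $t_w$, one copy of $W_w^1$ at each of the $2N$ steps of round $t_w$. The budget transfers exactly: at most $B$ vertices are vaccinated at any time in $T$, so $\sigma'$ places at most $B$ firefighters per step in $T_G$, and by the deadline observation above all $2N$ copies of each $W_w^1$ are vaccinated before the fire can arrive. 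In particular all $2N$ copies of $W_v^1$ get vaccinated, while no $W_a^1$ for an ancestor $a$ of $u$ is touched, so fire travels down the ancestor path in $T_G$ and eventually burns $W_u^1$.

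For the $T_G\Rightarrow T$ direction, suppose $\sigma'$ saves $W_v^1$ but not $W_u^1$. The main task is to reduce to the case where $\sigma'$ directly vaccinates all $2N$ copies of $W_v^1$. Any other way of saving $W_v^1$ requires a full-vaccination block somewhere on the unique $s$-to-$W_v^1$ path in $T_G$, i.e., at some $W_a^j$ or $a$ with $a$ an ancestor of $v$. If $a$ is a strict ancestor of $u$, the same block protects $W_u^1$, contradicting the hypothesis; if $a=u$ and $j=1$, it saves $W_u^1$ directly, another contradiction; and if $a=u$ with $j\ge 2$ or $a=u$ itself, the block uses $4N$ vaccinations that can be re-allocated to the $2N$ copies of $W_v^1$ (feasible because fire reaches $W_v^1$ strictly later than either $W_u^j$ or $u$, so the time windows are compatible), producing a strategy that directly vaccinates $W_v^1$ while still not saving $W_u^1$. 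With this normalization, define $\sigma$ in $T$ by vaccinating, at time $t_w$, each vertex $w$ whose $W_w^1$ first becomes fully vaccinated in round $t_w\le d(w)$ of $\sigma'$. Since each fully-vaccinated $W_w^1$ consumes $2N$ of the $2NB$ vaccinations available per round, at most $B$ vertices complete per round and the induced $\sigma$ respects budget $B$. By construction $v$ is vaccinated in $\sigma$ and hence saved, while the ancestor/block argument above guarantees that no ancestor of $u$ is vaccinated, so fire reaches $u$ in $T$.

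The main obstacle I expect is precisely the normalization step in the backward direction: the hypothesis does not a priori force $\sigma'$ to vaccinate $W_v^1$ outright, so one has to argue carefully that every alternative upstream block is either ruled out by the ``$W_u^1$ not saved'' contradiction or can be shifted downstream onto $W_v^1$ without exceeding budget or violating the timing. Once that normalization is in place, everything else is straightforward round-by-round bookkeeping that rests on the off-by-one fire-arrival identity $2N\,d(w)+1$.
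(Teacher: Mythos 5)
Your proposal is correct and follows essentially the same route as the paper: the $2N$-fold time dilation of each tree edge lets one copy of $W_w^1$ be vaccinated per step during the round corresponding to a single tree time-step, and your normalization in the backward direction is exactly the paper's Lemma 3 (reallocating a $4N$-block at some $W_u^j$ or $u$ onto the $2N$ copies of $W_v^1$), so the two arguments match; yours is in fact more explicit than the paper's purely local sketch. The one soft spot is your rule ``vaccinate $w$ in the round where $W_w^1$ first completes,'' which can assign more than $B$ vertices to one round if their copies were vaccinated across several earlier rounds, but this is repaired by the cumulative count (at most $tB$ blocks can be completed by the end of round $t$) together with earliest-deadline-first scheduling, so the conclusion stands.
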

	
	\begin{proof}
		Now to correspond unit disk graph problem to the case of full rooted tree of maximum degree three problem, it is sufficient to prove that
		
		\begin{enumerate}
			\item \label{en:1} If it is possible to save vertex $v$ while its parent $u$ can't be saved in a full rooted tree then it is possible to save $W_v^1$ when $W_u^1$ can't be saved in $T_G$.
			\item \label{en:2} In a contra-positive sense if there is no strategy to save $v$ while its parent $u$ is infected in $T$ then there is no strategy in $T_G$ to save $W_v^1$ when $W_u^1$ is infected and fire came ahead of it.
			
		\end{enumerate}
		
		Now, for case \ref{en:1}, when $u$ just gets infected and the fire is about to come towards $v$, we vaccinate $v$. The similar case in $T_G$ will be when fire reaches $W_u^1$ and affects some of its images, we start vaccinating the images of $W_v^1$ and it takes $2N$ time steps to reach $W_v^1$ by then we vaccinate all the images of it and save $W_v^1$.

		For case \ref{en:2}, when $u$ is infected and fire started to move towards $v$ and by no strategy we can vaccinate $v$, the similar case in $T_G$ will be when $W_u^1$ is already infected and fire moved towards $u$ and affected $W_u^2$ and none of the images of $W_v^1$ is vaccinated. Then by the time fire reaches $W_v^1$, we can't save all the images of it as it will take $2N - 1$ time steps and we have to vaccinate $2N$ vertices. So, $W_v^1$ can't be saved.
	\end{proof}
	
	In \textbf{Lemma 3}, the equivalence in terms of burning or saving between vertices $v$ and $W^1_v$ of graph $T_G$ has been shown and in \textbf{Lemma 4}, the equivalence of saving strategies of vertex $W^1_v$ in graph $T_G$ and vertex $v$ in graph $T$ has been shown. These two lemmas together prove that saving or burning of vertex $v$ in graphs $T_G$ and $T$ are equivalent and thus prove \textbf{Lemma 2}.
	
	\textbf{Lemma 1} and \textbf{Lemma 2} together proves \textbf{Theorem 1}. 
	
	\section{Approximation Algorithm for MIN-BUDGET problem} $\ \ \ $
	Firefighter problem on interval graphs are polynomial time solvable. Keeping this fact in mind, the underlying geometry of unit disk graph in Euclidean plane can be modified a bit to get another geometry which has a polynomial time solution.
	
	Following is a polynomial time algorithm for the firefighter problem on interval graphs, which will be extended later for an approximation algorithm for the same problem on unit disk graph.

	\subsection{Exact Algorithm for Interval Graph}
	\label{sec:min-budget-problem}
	
	Let $(G,s,T)$ be an instance of the MIN-BUDGET Firefighter problem where $s$ is the source vertex  and $T \subset V_G$ is the set of vertices to be saved. Algorithm 1 formally describes algorithm.

	Algorithm \ref{alg:interval} is an exact algorithm.
	
	\begin{algorithm}[htb]
		\caption{Algorithm for Firefighter Min-Budget Problem on Interval Graph}
		\label{alg:interval} 
		\begin{algorithmic}
			\STATE \textbf{INPUT}: $G(V,E),v,T|$ where $G$ is an Interval Graph, $v \in V$ be the source vertex of fire and $T \subseteq V$ be the set of vertices to save.
			\STATE $T_{1}=T \cap N(v)$
			\STATE $T'=T \setminus T_{1}$
			\STATE $V_{1}=V$
			\STATE $A_{1}=|n(v)|$
			\STATE $B_{1}=|T_{1}|$
			\FOR{$i=2:n:1$}
			\IF{$|T'| \neq 0$}
			\STATE $V_{i}=V_{i-1} \setminus T_{i-1}$.
			\STATE $G_{i}(V_{i},E_{i})| \forall u,v \in V_{i}, (u,v) \in E \imply (u,v) \in E_{i}$
			\STATE $T_{i}=\{t \in T'| Dist_{G_{i}}(v,t)=i\}$
			\STATE $K_{i}=$ Minimum Cut-set of $T_{i}$ in $G_{i}$
			\STATE $B_{i}=\Bigl\lceil\frac{\sum_{k=1}^{i}|T_{k}|}{i}\Bigr\rceil$
			\STATE $A_{i}=\Bigl\lceil\frac{\sum_{k=1}^{i-1}|T_{k}|+|K_{i}|}{i}\Bigr\rceil$
			\ENDIF
			\ENDFOR
			
			\IF{$\max_{i}B_{i} \leq \min_{i}A_{i}$}
			\STATE Then Budget is $\max_{i}B_{i}$ and firefighters are to be placed on $T$ \ELSE \STATE Let $\min_{i}A_{i}=A_{m}$, Then Budget is $\max(A_{m},\max_{k}B_{k}|k < m)$ and firefighters are to be placed on $\cup_{i=1}^{m-1} T_{i} \cup K_{m}$
			
			\ENDIF
		\end{algorithmic}
		
	\end{algorithm}

	\paragraph{Correctness of Algorithm \ref{alg:interval}} $\ $
	
	\begin{lemma}
		In the optimal strategy, Firefighters have to be placed on $T_{i}$, if not placed on all the vertices of $K_{i}$ at $i^{th}$ time instant. 
	\end{lemma}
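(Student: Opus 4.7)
The plan is to show that, in any optimal strategy, the cumulative set of vaccinated vertices by the end of time step $i$ must either cover $T_i$ entirely or contain a $v$-$T_i$ vertex separator in the current graph $G_i$. Once this dichotomy is in place, optimality forces the second option (when chosen) to use a minimum such separator, which by definition is $K_i$. The argument is essentially a Menger-theorem reduction applied one level of the BFS fire front at a time.

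I would first fix an optimal strategy and let $S$ denote the set of vertices vaccinated by the end of time step $i$. Because every $t \in T_i$ lies at distance exactly $i$ from $v$ in $G_i$, if $t$ is not itself in $S$ then fire arrives at $t$ at time step $i$ along some shortest $v$-to-$t$ path, unless some vertex on that path belongs to $S$. For $t$ to be saved, every $v$-to-$t$ path in $G_i$ must therefore be intercepted by $S$. Applying this to each unsaved $t \in T_i$ simultaneously, Menger's theorem on vertex cuts yields a $v$-$T_i$ separator contained in $S \setminus T_i$. Hence any optimal strategy satisfies one of: $T_i \subseteq S$, or $S$ contains a $v$-$T_i$ separator in $G_i$. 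Since $K_i$ is by construction a minimum $v$-$T_i$ separator, whenever the cut option is taken the strategy can be modified to use exactly $K_i$ without increasing the firefighter count, which establishes the lemma.

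The main obstacle is justifying the timing part of the statement, namely that the pertinent placements can be attributed to time step $i$ rather than spread ambiguously across earlier steps. This uses the interval-graph structure: a minimum $v$-$T_i$ separator in $G_i$ is a clique of intervals sitting strictly between $v$ and $T_i$ in the interval ordering, and every such clique lies entirely at a single distance level from $v$. Consequently, vaccinations meant to block the fire before it reaches level $i$ can, without loss of generality, be placed on $K_i$ at the latest possible round, which is step $i$; this is exactly what the quantities $A_i$ and $B_i$ in Algorithm~\ref{alg:interval} account for. Modulo this structural observation, the lemma reduces to the clean Menger argument above.
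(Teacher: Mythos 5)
Your proof is correct and follows essentially the same dichotomy as the paper: either all of $T_i$ is protected, or the vaccinated set must contain a $v$--$T_i$ separator, which by minimality of $K_i$ may be taken to be $K_i$ itself. You are in fact more careful than the paper (which skips both the exchange-to-$K_i$ step and the timing issue entirely), though note that invoking Menger's theorem is unnecessary here --- the separator property is just the definition --- and a separator clique in the BFS layering can span two consecutive levels rather than one, which does not affect the conclusion.
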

	
	\begin{proof}
		Consider the $i^{th}$ time instance.  Suppose, there are some unprotected vertices of $T_i$. Moreover suppose there are some vertices in $K_{i}$ where no firefighter is placed.  Then $\exists u_{i} \in T_{i}$, where fire reaches to the unprotected vertices of $T_{i}$ at $i^{th}$ or $(i+1)^{th}$ time instant, which is not permitted. So, we have to place firefighters on the unprotected vertices of $T_{i}$ at $i^{th}$ or $(i+1)^{th}$ time instant. So, placing the firefighters on $K_{i} \setminus T_{i}$ at $i^{th}$ time instant will be waste of firefighters and will also increase the budget. So, it will be wise to place the firefighters either on $K_{i}$ or on $T_{i}$ at $i^{th}$ time instant.
	\end{proof}
	
	\begin{lemma}
		If at $i^{th}$ time instant firefighters are placed on $K_{i}$ then fire will never reach to $T_{k} \forall k \geq i$.
	\end{lemma}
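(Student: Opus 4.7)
The plan is to exploit the linear structure of interval graphs to show that the minimum $(v, T_i)$-cut $K_i$ is in fact a frontal separator that blocks every path from $v$ to every vertex of $T_k$ with $k \geq i$, not just to $T_i$.

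First, I would use the interval representation of $G_i$ on the real line; since $G_i$ is an induced subgraph of the interval graph $G$, it is itself an interval graph. After a translation, I may assume that the interval representing $v$ is the leftmost. Under this assumption I would establish the standard monotone BFS-layer property: if $L_j$ denotes the set of vertices of $G_i$ at distance exactly $j$ from $v$, then the intervals representing $L_{j+1}$ lie strictly to the right of every interval representing a vertex of $L_{j-1}$, so that the layers form left-to-right blocks along the line. This follows by induction on $j$ from the characterization of interval graphs and the fact that a vertex at distance $j+1$ can intersect only layer-$j$ and layer-$(j+1)$ intervals.

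Second, I would argue that the minimum $(v, T_i)$-cut $K_i$ can be realized as a frontal separator in the interval representation: $G_i \setminus K_i$ splits into a left component $C_v$ containing $v$ (and hence all of $L_0 \cup \cdots \cup L_{i-1}$) and a right part containing $T_i$. Combining this with the monotone layer property, every $T_k$ with $k \geq i$ lies in the layer $L_k$, which sits to the right of $L_{i-1}$ and $L_i$; so no vertex of $T_k$ can lie in $C_v$. Hence $K_i$ disconnects $v$ from every $T_k$ with $k \geq i$ in $G_i$.

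Third, I would translate this into the firefighter dynamics: at time $i$, the burning vertices are contained in $C_v$, since by the assumed strategy $T_1,\dots,T_{i-1}$ have already been vaccinated, and the fire has not yet crossed $K_i$. Once firefighters are placed on $K_i$, the fire is permanently confined to $C_v \setminus K_i$, which contains no vertex of $T_k$ for $k \geq i$, so the claim follows.

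The main obstacle will be rigorously establishing that the minimum $(v, T_i)$-cut in an interval graph is a frontal separator. A clean way is via an exchange argument: if some $K_i$ left a vertex of $T_k$ (for some $k > i$) in the $v$-side component, then using the monotone layering one can construct a strictly smaller $(v, T_i)$-separator by replacing some intervals of $K_i$ by the \emph{leftmost} clique of intervals on the cross-section between $L_{i-1}$ and $L_i$, contradicting the minimality of $K_i$. Care is also needed to confirm that all these structural properties survive the passage from $G$ to the induced subgraph $G_i = G_{i-1} \setminus T_{i-1}$, which is immediate since induced subgraphs of interval graphs are interval graphs.
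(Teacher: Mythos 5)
Your proposal takes a genuinely different route from the paper, and it has two concrete gaps. The paper's argument is a ``width'' argument specific to unit intervals: once $K_i$ is placed, the saved set $W_i=T_i\setminus K_i$ occupies a stretch of the real line of span at least $1$, and since no unit interval can straddle that stretch without intersecting $W_i$ (which would then burn, a contradiction), the fire cannot pass beyond it to any $T_k$, $k\geq i$. Your proposal instead tries to prove the statement for general interval graphs via BFS layering and the structure of a minimum $(v,T_i)$-cut, and this is where it breaks.

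First, the step ``after a translation I may assume the interval representing $v$ is the leftmost'' is not available: a translation does not change the relative order of the intervals, and for an interior source the fire front advances in both directions, so the claimed monotone layering ($L_{j+1}$ strictly to the right of $L_{j-1}$) is false as stated. This is repairable by splitting into a left front and a right front, but it must be done. Second, and more seriously, the frontal-separator claim --- which you correctly identify as the crux --- does not follow from minimality of $K_i$, and the sketched exchange argument fails. A minimum cut separating $v$ from $T_i$ only needs to block paths that end in $T_i$; if $T_i$ is a small subset of the layer $L_i$, such a cut can be far smaller than any separator of the whole cross-section between $L_{i-1}$ and $L_i$, and the fire can simply flow around it through the unprotected vertices of $L_i\setminus T_i$ to reach $T_k$ for $k>i$. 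Replacing $K_i$ by the ``leftmost clique on the cross-section'' would then produce a \emph{larger} separator, so no contradiction with minimality arises. To rescue the statement one must either interpret $K_i$ as a separator of the entire fire front (which is what the paper implicitly does, and what its ``span of $W_i\geq 1$'' step exploits together with the unit length of the intervals), or supply an argument of the paper's kind showing that no interval can bridge past the saved set; neither ingredient appears in your proposal.
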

	
	\begin{proof}
		Suppose at $i^{th}$ time instant firefighters are placed on all vertices of $K_{i}$, then fire can never reach to any of the vertices of $W_{i}|W_{i}=T_{i} \setminus K_{i}$. The span of $W_{i} \geq 1$ as the intervals are of unit length. So, fire will never reach to $T_{k} \forall k \geq i$.
	\end{proof}
	
	\paragraph{}
	If at some level $i$, the better strategy is to put firefighters on $K_{i}$ then the fire fighting process will stop at that time instant. So, either we have to place fire fighter on $K_{i}$ at $i^{th}$ time instant and on $T_{k}$'s at corresponding time instant for $k < i$ or we have to put fire fighters on $T_{i}$'s $\forall i$. Now, if the optimum strategy is to place fire fighters on $K_{i}$ at $i^{th}$ step then the budget must be greater than both $A_{m}$ and $\max_{k}B_{k}|k < m$, where $\min_{i}A_{i}=A_{m}$ as both are the necessary conditions. But, if $\max_{i}B_{i} \leq \min_{i}A_{i}$, then $\max_{i}B_{i}$ gives the optimum budget as a necessary condition. This proves the correctness of the algorithm.

	\subsection{Approximation Algorithm for Unit Disk Graph}
	\label{sec:appr-algor-unit}
	
	\begin{lemma}
		Firefighter MIN-BUDGET problem on Unit Disk Graph has an approximate solution using the algorithm for MIN-BUDGET problem on Unit Interval Graph.
	\end{lemma}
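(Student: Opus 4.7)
The plan is to reduce the unit disk graph instance to a unit interval graph instance by projecting the disks onto the $x$-axis, invoke Algorithm 1 on the projected instance, and argue that the budget it returns is within a constant factor of the UDG optimum. Construction: given a UDG instance $(G,s,T)$ with disk centers $(x_v,y_v)$ for $v\in V_G$, I build a unit interval graph $G_I$ by assigning to each vertex $v$ the unit interval centered at $x_v$. If $(u,v)\in E_G$, then $d_{\ell_2}(u,v)\leq 1$, which forces $|x_u-x_v|\leq 1$, so the intervals of $u$ and $v$ overlap; hence $E_G\subseteq E_{G_I}$.

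Since the fire in $G$ propagates only along edges of $G$, which form a subset of $E_{G_I}$, any strategy that saves $T$ in $G_I$ automatically saves $T$ in $G$ with the same budget. Invoking Algorithm 1 on $(G_I,s,T)$ therefore yields a valid strategy for the UDG with some budget $B_I$; the output of the approximation algorithm is this strategy transported back to $G$.

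It remains to bound $B_I \leq c\cdot B^{\ast}$, where $B^{\ast}$ is the UDG optimum. The plan is to slice the plane into horizontal strips of height bounded by a small constant and exploit the fact that a unit disk intersects only $O(1)$ such strips. Within a single strip, the induced UDG is itself a unit interval graph, so the edges added in passing from $G$ to $G_I$ are precisely the edges joining vertices lying in different strips whose $x$-projections happen to overlap. Using this, I would show that the min cut-set $K_i$ at step $i$ of Algorithm 1 on $G_I$, together with the level set $T_i$, has size at most a constant multiple of the analogous quantities in $G$; plugging these bounds into the closed forms for $A_i,B_i$ inside Algorithm 1 then yields $B_I=O(B^{\ast})$.

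The main obstacle is exactly this ratio bound. Feasibility is essentially free once the super-graph relation $E_G\subseteq E_{G_I}$ is established, but the extra edges of $G_I$ can both shorten BFS distances from $s$ (shifting vertices to earlier levels $T_i$) and enlarge the level-$i$ cut-sets relative to those in $G$. The delicate part will be to show, via the strip decomposition and the $O(1)$ bound on how many strips a disk can span, that both of these inflation effects are controlled by an absolute constant, so that the budget $B_I$ returned by Algorithm 1 on $G_I$ stays within a constant factor of the true UDG optimum $B^{\ast}$.
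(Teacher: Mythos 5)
There is a genuine gap, and it is fatal to the approach as stated. Your feasibility argument is fine: since $E_G\subseteq E_{G_I}$, the burnt set in $G$ at every time step is contained in the burnt set in $G_I$ under the same vaccination schedule, so a strategy saving $T$ in $G_I$ saves $T$ in $G$. But the budget bound $B_I\leq c\cdot B^{\ast}$ cannot be established, because projecting onto the $x$-axis destroys all vertical distance information. Concretely, take $n$ unit disks centered at $(0,0),(0,2),(0,4),\dots,(0,2(n-1))$, so that $G$ is a path; with fire at $(0,0)$ and $T$ the rest of the path, one firefighter placed at $(0,2)$ in the first step gives $B^{\ast}=1$. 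Under your projection every vertex receives the same unit interval, so $G_I$ is a clique, the fire reaches all of $T$ at time $1$, and any feasible strategy for $G_I$ needs budget $|T|=n-1$. Hence $B_I/B^{\ast}=\Omega(n)$, and no constant $c$ exists. The strip decomposition does not repair this: it is true that each disk meets $O(1)$ strips, but the edges you add in forming $G_I$ join vertices in strips that are \emph{arbitrarily far apart} vertically (any two vertices with overlapping $x$-projections), and it is exactly these long-range spurious edges that collapse the BFS levels and blow up the cut-sets. The quantity you would need to bound --- the number of strips spanned by a spurious edge --- is unbounded, not $O(1)$.

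The paper avoids this by never discarding the second coordinate. It replaces each disk by its circumscribing axis-parallel square and then runs the interval-graph algorithm \emph{simultaneously in four directions} (left, right, up, down) outward from the source, maintaining a growing ``burning rectangle'' whose four sides each behave like a one-dimensional interval-graph instance; the loss comes only from approximating the burnt region by a rectangle, which the paper bounds by comparing the time $\max(X,Y)$ at which the rectangle reaches a point to the true graph distance, yielding a constant factor. The structural lesson is that any reduction to interval graphs for this problem must be local in both coordinates; a global projection onto one axis cannot preserve distances even approximately.
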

	
	\begin{proof}
		
		Let, there be $n$ vertices in the unit disk graph $G(V,E)$. \textit{v} be the source vertex of fire and $T \subset V$ be the set of vertices to be saved. We first replace the disks with their circumscribing squares with sides parallel to the coordinate axes. Now, this problem can be seen as four firefighter MIN-BUDGET problems on Unit interval graphs by observing that fire can spread in the up, down, left or right direction accordingly starting from \textit{v}. So, we can use our algorithm for MIN-BUDGET problem on Unit Interval Graph simultaneously in the four direction around \textit{v}. Due to this modification in this solution an approximation factor will be introduced in the result. For a square(say $S_u$) in the Euclidean plane corresponding to a vertex(say $u$) of the graph, the leftmost boundary of $S_1^u$ is given by $x=X_l^u$ and the down most boundary is given by $y=Y_d^u$. Algorithm for solving firefighter MIN-BUDGET problem on this modified graph is given in Algorithm \ref{alg:unitdisk}.
		
		\begin{lemma}
			Due to approximating the firefighter MIN-BUDGET problem on Unit Disk Graph as combination of four firefighter MIN-BUDGET problem on Unit Interval Graph the solution introduces a constant approximation factor of $2$.
		\end{lemma}
		
		\begin{proof}
			Firstly, we eliminate the effect of modifying the disks into circumscribing squares by checking at each time step of the spread of fire whether the new disks are having any intersection with the disks defining the boundaries of the burning rectangle. In the algorithm a forced assumption is that the region in fire at any instant of time forms a rectangle in the Euclidean plane in not always true.Due to the burning rectangle approximation, a vertex corresponding to an unit disk centered at $(X,Y)$ can burn as early as at Max$(X,Y)$ time instant, while in actual case it can take at most $2(X+Y)$ time. As the fire has to travel at most $2k$ squares to reach $(k,0)$ point starting from $(0,0)$ (limiting case when every square has an neighbor square slightly shifted in $X$ direction). Due to this effect the partial average at each time instant can be as big as $2(X+Y)/$max$(X,Y)$ times of the actual. Here the approximation factor becomes $2(X+Y)/$ max$(X,Y)$, which is less than $2$.
		\end{proof}
		
		\textbf{Lemma 7} and \textbf{Lemma 8} together proves \textbf{Theorem 2}.
		
		\begin{algorithm}[]
			\small
			\caption{Firefighter Min-Budget Problem on Unit Disk Graph}
			\label{alg:unitdisk}
			\begin{algorithmic}[1]
				\STATE \textbf{INPUT}: $G(V,E),v,T|$ where $G$ is an Unit Disk Graph, $v \in V$ be the source vertex of fire and $T \subseteq V$ be the set of vertices to save.
				\STATE Replace all the disks by their circumscribing squares with sides parallel to coordinate axes and $G(V,E)$ becomes an Unit Square Graph.
				\STATE $\forall v_{i} \in V$ sort them according to the $X_{l}^{i}$ values of their corresponding Square $S_{i}$ in increasing order. If $X_{l}^{i}=X_{l}^{j}$ then sort them according to their increasing order of $Y_{d}^{i}$ and $Y_{d}^{j}$.
				\STATE $\forall v_{i} \in V$ sort them according to the $Y_{d}^{i}$ values of their corresponding Square $S_{i}$ in increasing order. If $Y_{d}^{i}=Y_{d}^{j}$ then sort them according to their increasing order of $X_{l}^{i}$ and $X_{l}^{j}$.
				\STATE The boundaries of the burning rectangle ($R$) be, $R_l^0=X_l^v, R_r^0=X_r^v, R_u^0=Y_u^v, R_d^0=Y_d^v$.
				\STATE $T^0=\phi$.
				\STATE $T'=T \setminus T^0$.
				\STATE $V_{0}=V$.
				\STATE $C_m^0=\{v\} \forall m \in \{l,r,u,d\}$.
				\FOR{$i=1:n:1$}
				\IF{$T' \neq \phi$}
				\STATE $V_r^i=W\ |\ \forall w \in W \subset V, R_r^{i-1}-1<S_l^w<R_r^{i-1} \ \& \  R_d^{i-1}-1<S_d^w<R_u^{i-1} \ \& \ \exists v \in C_r^{i-1}|(S_l^w-S_l^v)^2+(S_d^w-S_d^v)^2 \leq 1$.
				\STATE $V_l^i=W\ |\ \forall w \in W \subset V, R_l^{i-1}+1>S_r^w>R_l^{i-1} \ \& \  R_d^{i-1}-1<S_d^w<R_u^{i-1} \ \& \ \exists v \in C_l^{i-1}|(S_r^w-S_r^v)^2+(S_d^w-S_d^v)^2 \leq 1$.
				\STATE $V_u^i=W\ |\ \forall w \in W \subset V, R_u^{i-1}>S_d^w>R_u^{i-1}-1 \ \& \  R_l^{i-1}-1<S_l^w<R_r^{i-1} \ \& \ \exists v \in C_u^{i-1}|(S_l^w-S_l^v)^2+(S_d^w-S_d^v)^2 \leq 1$.
				\STATE $V_d^i=W\ |\ \forall w \in W \subset V, R_d^{i-1}<S_u^w<R_d^{i-1}+1 \ \& \  R_l^{i-1}-1<S_l^w<R_r^{i-1} \ \& \ \exists v \in C_d^{i-1}|(S_l^w-S_l^v)^2+(S_u^w-S_u^v)^2 \leq 1$.
				\FOR {$m \in \{l,r,u,d\}$}
				\STATE $T_m^i=T \cap V_m^i$.
				\IF{$V_m^i \setminus T_m^i \neq \phi$}
				\STATE $C_m^i=V_m^i \setminus T_m^i$
				\ELSE 
				\STATE $C_m^i=C_m^{i-1}$
				\ENDIF
				\STATE $R_m^i=X_m^v||X_m^v|=\max_{w}|X_m^w|\forall w \in C_m^i$
				\ENDFOR
				\STATE $T^i=T_r^i \cup T_l^i \cup T_u^i \cup T_d^i$.
				\STATE $T'=T'\setminus T^i$.
				\STATE $V_{i}=V_{i-1} \setminus T^i$.
				\STATE $G_{i}(V_{i},E_{i})| \forall u,v \in V_{i}, (u,v) \in E \imply (u,v) \in E_{i}$.
				\STATE $K_m^i=$ Minimum Cut-set of $V_m^i$ in $G_i$ where, $m \in \{r,l,u,d\}$.
				\STATE $B_m^i=\Bigl\lceil\frac{\sum_{k=1}^{i}|T_m^k|}{i}\Bigr\rceil$ where, $m \in \{r,l,u,d\}$.
				\STATE $A_m^i=\Bigl\lceil\frac{\sum_{k=1}^{i-1}|T_m^k|+|K_m^i|}{i}\Bigr\rceil$ where, $m \in \{r,l,u,d\}$.
				\ENDIF
				\ENDFOR
				\STATE $\mathfrak{B}=0, \mathcal{V}=\phi$. 
				\FOR {each $m \in \{r,l,u,d\}$}
				\STATE $T_m= \cup_{i=1}^{n} T_m^i$.
				\STATE $\mathfrak{B}_m=0, \mathcal{V}_m=\phi$.
				\IF{$\max_{i}B_m^i \leq \min_{i}A_m^i$}
				\STATE Then Budget $\mathfrak{B}_m$ is $\max_{i}B_m^i$ and firefighters are to be placed on $\mathcal{V}_m=T_m$ \ELSE \STATE Let $\min_{i}A_m^i=A_m^t$, Then Budget $\mathfrak{B}_m$ is $\max(A_m^t,\max_{k}B_m^k|k < t)$ and firefighters are to be placed on $\mathcal{V}_m=\cup_{i=1}^{t-1} T_m^i \cup K_m^t$.
				\ENDIF
				\STATE $\mathfrak{B}=\mathfrak{B}+\mathfrak{B}_m$.
				\STATE $\mathcal{V}=\mathcal{V} \cup \mathcal{V}_m$.
				\ENDFOR 
				
			\end{algorithmic}
			
		\end{algorithm}
		
	\end{proof}
	
	\subsection{Analysis of Algorithm \ref{alg:unitdisk}}
	The algorithm for minimum budget firefighter problem on unit disk graph is based on the algorithm for minimum budget firefighter problem on unit interval graph described in Algorithm \ref{alg:interval}. We consider the problem on unit disk graph as a combination of four simultaneous interval graph problems in four different directions, i.e. left($l$), right($r$), up($u$), down($d$). For a given graph $G(V,E)$ and a fire break-out point $v \in V$, we have to find the minimum budget of firefighters to be placed at each time step to save a given set of vertices $T \subseteq V$.
	
	As a first step, we replace all the unit disks corresponding to the vertices of $G$ with their circumscribing squares with sides parallel to the rectilinear axes of the plane. Then we sort the vertices of $G$ in two different arrays one according to the X-coordinates of the centers of their corresponding squares and the other according to the Y-coordinates of the centers of their corresponding squares. These two sorted arrays are for use in different stages of the algorithm later.
	
	Afterwards, we define and initialize a number of variables namely Squares corresponding to the disk $v$ i.e. $(S^v)$and its boundaries$(S^v_m)$, boundaries of the burning rectangle$(R^i_m)$ at $i^{th}$ iteration and direction $m$, Set of vertices saved$(T^i_m)$ at $i^{th}$ iteration and direction $m$, vertices saved in $i^{th}$ iteration$(T^i)$ and in each direction$(T^i_m)$, vertices at the boundary of the burning rectangle at $i^{th}$ iteration in each direction$(C^i_m)$, Budget required at each direction$(\mathfrak{B})_m$ and set of vertices on which firefighters to be placed$(\mathcal{V}_m)$.
	
	Now at each iteration, we find the new set of vertices whose corresponding squares have inter section with burning rectangle and also filter them depending on whether their corresponding disks have intersection with any of the disks corresponding to the set of vertices constituting the boundary of the burning rectangle at previous iteration. From this we update the burning rectangle and its boundary, vertices to be saved at current iteration, set of vertices constituting the new boundary of the burning rectangle and the minimum cut set of the vertices that completely holds the fire at current iteration. Depending on the cut set and the saved vertices at each iteration, we fix the budget for each direction along with the final budget and the set of vertices to be placed firefighter on.

	\subsection{Correctness of the Algorithm}
	
	In the algorithm for Minimum Budget Firefighter Problem on Unit Disk Graph, the algorithm for Minimum Budget Firefighter Problem on Unit Interval Graph has been applied simultaneously in four directions namely, upwards, downwards, left and right using the subscript $m$. So the correctness of the algorithm in each of these four directions is guaranteed by the correctness of the algorithm for interval graph that is already proved in this paper(section 4.1). 
	
	It should also be noted that due to geometric assumption of burning rectangle an approximation factor of $2$ has been introduced by this algorithm as described in \textbf{Lemma 8}.

	\bibliographystyle{unsrt}
	\bibliography{Firefighter}
	
\end{document}